\definecolor{riverlane_green}{RGB}{0, 111, 98}
\definecolor{riverlane_light_green}{RGB}{0, 150, 143}
\definecolor{riverlane_orange}{RGB}{255, 117, 0}
\definecolor{riverlane_red}{RGB}{220, 68, 5}
\definecolor{riverlane_pink}{RGB}{207, 111, 127}
\newtheorem{theorem}{Theorem}
\newtheorem{lemma}{Lemma}[theorem]
\theoremstyle{empty}
\newtheorem{duplicate}{Theorem}
\pgfqpoint{\xradius}{0pt}}%
\pgfqpoint{\xradius}{0pt}}%
\begin{document}
\title{Quantum state preparation via piecewise QSVT}

\author{Oliver O'Brien}
\affiliation{Riverlane, Cambridge, United Kingdom}
\affiliation{Department of Applied Mathematics and Theoretical Physics, University of Cambridge,\\ Wilberforce Road, Cambridge, CB3 0WA, United Kingdom}
\email{opo20@cam.ac.uk}
\author{Christoph S{\"u}nderhauf}
\affiliation{Riverlane, Cambridge, United Kingdom}
\email{christoph.sunderhauf@riverlane.com}
\date{May 2025}

\begin{abstract}
    Efficient state preparation is essential for implementing efficient quantum algorithms. Whilst several techniques for low-cost state preparation exist, this work facilitates further classes of states, whose amplitudes are well approximated by piecewise polynomials. We show how such states can be efficiently prepared using a piecewise Quantum Singular Value Transformation along with a new piecewise linear diagonal block encoding. We illustrate this with the explicit examples of $x^\alpha\ket{x}$ and $\log x\ket{x}$. Further, our technique reduces the cost of window boosted Quantum Phase Estimation by efficiently preparing the B-spline window state. We demonstrate this window state requires 50 times fewer Toffolis to prepare than the state-of-the-art Kaiser window state, and we show that the B-spline window replicates the Kaiser window's exponential reduction in tail probability for QPE.
\end{abstract}

\maketitle

\tableofcontents

\section{Introduction}
State preparation is a major bottleneck for numerous quantum algorithms with applications in many fields including quantum chemistry \cite{Aspuru_Guzik_2005,RevModPhys.92.015003, PRXQuantum.2.040332, Huggins:2024hpa}, quantum machine learning \cite{Aaronson:2015scy}, quantum finance \cite{Chakrabarti2021thresholdquantum, Gonzalez-Conde:2021zbv, PhysRevE.103.063302, Dalzell:2022ysw}, and quantum computational fluid dynamics \cite{Lapworth:2022rcw}. The exponentially large state space afforded by a quantum computer that lends so much power when performing computations becomes a hindrance when we must initialise a state. In order to specify a \textbf{generic} $n$-qubit state \textbf{exactly} one must load $2^n$ amplitudes into our quantum computer, and hence any algorithms with a sub-exponential cost in $n$ become subdominant to the cost of state preparation. 

Luckily slight relaxations of these requirements provide us with efficient avenues for state preparation. For example, considering the preparation of only a subset of possible states or by attempting to prepare states approximately through some form of lossy compression. Multiple techniques utilising these approaches have been developed including: preparing Matrix Product States (MPS) \cite{Ran_2020, 9259933, Melnikov_2023, PhysRevA.101.010301, PhysRevResearch.4.043007, Iaconis:2023rna, PhysRevLett.132.040404}, sparse states \cite{Zhang2022}, truncated series expansions \cite{Welch_2014, Zylberman:2023knk, Moosa:2023ecz}, training variational circuits \cite{Zoufal_2019, Nakaji:2021dew, Marin-Sanchez:2021ast}, compressing repeated values \cite{Sunderhauf:2023xrz, PhysRevResearch.5.013200}, preparing states via coherent arithmetic \cite{Bhaskar:2015jjo, Haner:2018yea}, and QSVT polynomial approximation \cite{mcardleQuantumStatePreparation2022, Gonzalez-Conde:2023fjg}. In this work we extend the QSVT polynomial approximation techniques to allow for piecewise QSVT polynomial approximations.

QSVT (Quantum Singular Value Transformation) \cite{Gilyen:2018khw} is a quantum algorithm that allows one to apply arbitrary polynomial transformations to block-encoded matrices. The flexibility of this algorithm allows it to approximate many other quantum processes, and hence has found numerous applications including amplitude amplification  \cite{Gilyen:2018khw}, Hamiltonian simulation  \cite{Gilyen:2018khw}, phase estimation \cite{Martyn:2021eaf}, and linear system solvers \cite{Gilyen:2018khw}. 

Previous work utilised QSVT for state preparation to apply a global polynomial that approximates the amplitudes of a desired state, avoiding the need for coherent arithmetic and hence significantly reducing both the ancilla and Toffoli cost. Applications were demonstrated for a number of useful states including the Gaussian state and Kaiser window state \cite{mcardleQuantumStatePreparation2022}. However, this technique falls down when trying to simulate functions with sharp discontinuities or non-differentiable points. We navigate around these issues by utilising piecewise polynomial approximations which simply deal with discontinuities by matching them to the boundaries of segments and cope with variability in approximation difficulty between regions by varying the length of segments used. As such, our techniques fair well where previous works failed, in particular we can efficiently load $\sqrt{x}$ which \cite{mcardleQuantumStatePreparation2022} explicitly noted was not possible with a single polynomial. Furthermore, we can load the B-spline window state with orders of magnitude lower cost than the Kaiser window state and we prove that this gives a similarly exponential performance boost to QPE. 

In Section~\ref{section:piecewise_qsvt} we describe the technique for performing piecewise QSVT. First we present a new piecewise exact linear block encoding. Then, we demonstrate how to construct a QSVT circuit that implements piecewise polynomial transformations. We go on to discuss how to convert the transformed block encoding into a state and the resource costs of our quantum algorithm. In Section~\ref{section:applications} we explore various applications of our algorithm to loading states of interest exponentially faster than a naive approach.

\section{Piecewise QSVT}
\label{section:piecewise_qsvt}
The aim of piecewise QSVT is to load a state with amplitudes described by different polynomials for different segments of computational basis elements:
\begin{equation}
    \frac{1}{\sqrt{N}}\sum_{x=0}^{N-1} p_{s_x}\left( 1 - 2\>\frac{x \bmod L_{s_x}}{L_{s_x}}\right)  \ket{x}
    \label{eq:desired_state}
\end{equation}
where $s_x \in \mathbb{Z}_S$ indexes which of the $S$ polynomial segments $x$ lies in, $L_{s_x}$ denotes the length of that segment, and $p_{s_x}$ is the polynomial transformation applied to that segment.

Piecewise QSVT uses a modified QSVT circuit (Section~\ref{section:circuit}) to implement different polynomial transformations on different segments of a diagonal block encoded matrix. The diagonal (singular) values of the transformed block encoding are effectively samples from the polynomials taken for $x$-values corresponding to the original diagonal values. Hence, by knowing the values of the original diagonal block encoding: $1 - 2\>\frac{x \bmod L_{s_x}}{L_{s_x}}$ (Section~\ref{section:linear_block}), and controlling the polynomials implemented: $p_{s_x}$ (Section~\ref{section:circuit}), it is possible to produce a diagonal block encoding with values we desire: $p_{s_x}\left( 1 - 2\>\frac{x \bmod L_{s_x}}{L_{s_x}}\right) $. By acting on a uniform superposition and performing amplitude amplification (Section~\ref{section:state_prep}), this diagonal block encoding can be easily converted into the desired state \eqref{eq:desired_state}.

There are some conditions upon this process:
\begin{enumerate}[{(1)}]
    \item $|p_i(t)| \leq 1$ for all $t \in [-1,1]$ and $i \in \mathbb{Z}_S$
    \item $L_i$ must be a power of 2 ($L_i = 2^{l_i}$ for $l_i \in \mathbb{N}_0$)
    \item The $i$-th segment must span $x \in [aL_i, aL_i + (L_i - 1)]$ for some $a \in \mathbb{N}_0$
\end{enumerate}

Condition (1) is inherited from QSVT, whereas conditions (2) and (3) are necessary for the pieces to be described by ignoring bits of $x$ which is key to allowing an efficient piecewise algorithm.

\subsection{Piecewise exact linear block encoding}
\label{section:linear_block}
The singular values of our diagonal block encoding will become the $x$ values we wish to fit our polynomials at. As such it is desirable that our piecewise polynomials are each applied to singular values that span the whole range of possible values from 1 to $-1$. This makes it easier to ensure that the fitted polynomials satisfy condition (1) without rescaling. It also enables us to outperform the coherent inequality test based piecewise polynomial technique suggested in \cite{mcardleQuantumStatePreparation2022} which uses polynomials over disjoint domains between 0 and 1. In this case the polynomials may grow large outside the domains they approximate and hence could require large rescaling to satisfy condition (1) on the whole domain $[-1, 1]$.

To achieve this we define a piecewise linear diagonal block encoding $U$ such that the $x$-th entry is given by $1 - 2\>\frac{x \bmod L_{s_x}}{L_{s_x}}$ where $L_{s_x}$ is the length of the segment containing $x$:
\begin{equation*}
\scriptsize
\setlength{\arraycolsep}{1pt}
\renewcommand{\arraystretch}{1.0}
\left(
\begin{array}{ccccccccc|cc}
1 &   &   &   &   &   &   &   &   & \multirow{9}{*}{\Large $j_1$} & \\
  & 1 - \frac{2}{L_0} &   &   &   &   &   &   &   &  & \\
  &   & \ddots &   &   &   &   &   &   &  & \\
  &   &   & -1 + \frac{2}{L_0} &   &   &   &   &   &  & \\
  &   &   &   & \ddots &   &   &   &   &  & \\
  &   &   &   &   & 1 &   &   &   &  & \\
  &   &   &   &   &   & 1 - \frac{2}{L_{S-1}} &   &   &  & \\
  &   &   &   &   &   &   & \ddots &   &  & \\
  &   &   &   &   &   &   &   & -1 + \frac{2}{L_{S-1}} &  & \\
\hline
\multirow{2}{*}{\Large $j_2$} &   &   &   &   &   &   &   &   & \multirow{2}{*}{\Large $j_3$} & \\
                              &   &   &   &   &   &   &   &   &  & \\
\end{array}\right)
\end{equation*}
In this block encoding our desired matrix occupies the top-left block which corresponds to the flag qubits mapping from $\ket{\bm 0}$ to $\ket{\bm 0}$. All other initial or final values for the flag qubits correspond to the junk blocks $j_1$, $j_2$ or $j_3$.

Fig.~\ref{fig:be_circuit} shows the circuit that efficiently performs a controlled version of such a block encoding. The ``compute carry" computes the carry bit from adding the first $l$ bits of $x$ ($x_l = x \bmod 2^{l}$) and $k$ ($k_l = k \bmod 2^{l}$) where the value of $l$ is given by $l_{s_x}$ of the current segment. Fig.~\ref{fig:compute_carry} demonstrates how this can be performed for the example $l_{max} =4$ using variable unary iteration and a quantum adder. First the carry bits are computed for all $l$ via a slightly modified version of the adder circuit for adding classical bits to quantum registers described in \cite{Sanders_2020} (and illustrated in Fig.~\ref{fig:adder}); we have removed the final addition layer and made the uncompute symmetrical to ensure the block encoding is Hermitian. Then variable unary iteration \cite{Sanders_2020} controls a copy from the carry bits into the flag register, so the correct carry bit is copied depending on the size of the segment $l_{s_x}$. 

Variable unary iteration is essential to the efficiency of our algorithm, as it only requires $S-2$ compute/uncompute Toffoli pairs to produce $S$ different segments, compared to the $2^n-2$ pairs required by standard unary iteration. This is accomplished by simply ignoring some of the least significant bits during the unary iteration, hence the segments must all be a power of 2 in length and begin at an integer divisible by that power of 2 (Conditions (2) and (3)).

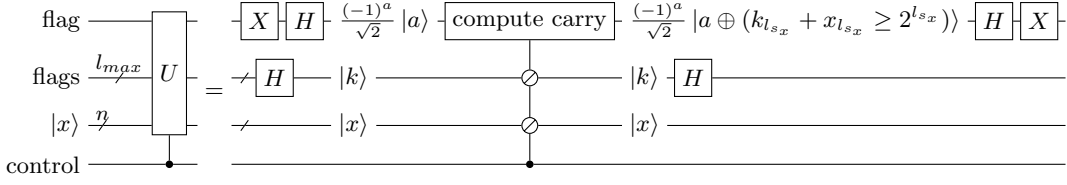
\begin{figure*}
\centering
\begin{tikzpicture}
\begin{yquant}


qubit {flag} flag1;
qubit {flags} flags;
qubit {$\ket{x}$} block;
qubit {control} flag2;

["north:$n$" {font=\protect\footnotesize, inner sep=0pt}]
slash block;

["north:$l_{max}$" {font=\protect\footnotesize, inner sep=0pt}]
slash flags;

[multictrl]
box {$U$} ( flag1, flags, block)|  flag2 ;

text {$=$} (-);
slash block;
slash flags;

x flag1;
h flag1;

h flags;

align -;
text {$\frac{(-1)^a}{\sqrt{2}}\ket{a}$} flag1;
text {$\ket{k}$} flags;
text {$\ket{x}$} block;

[multictrl]
box {compute carry} (flag1) |  flag2 ~ flags, block ;


text {$\ket{k }$} flags;
text {$\ket{x }$} block;
text {$\frac{(-1)^a}{\sqrt{2}}\ket{a \oplus (k_{l_{s_x}}+x_{l_{s_x}}\ge 2^{l_{s_x}}) }$} flag1;

h flag1;
x flag1;

h flags;
\end{yquant}
\end{tikzpicture}
\caption{\label{fig:be_circuit}Circuit implementing controlled version of piecewise exact linear block encoding where $L_{s_x} = 2^{l_{s_x}}$ and $l_{max} = \max_i(l_i)$. The slash circle control $\oslash$ is used to denote multiplexed control \cite{lowTradingTgatesDirty2018} (e.g. in this case it means that the action of the compute carry depends on the values of $k$ and $x$ in the corresponding registers). The mid circuit kets indicate the basis states per register with the full state being a summation over $a$ and $k$; after the application of compute carry these basis states are entangled.}
\end{figure*}

It is simple to prove that Fig.~\ref{fig:be_circuit} provides a block encoding of the desired matrix:
\begin{align*}
&(\bra{0}\otimes\bra{0}^{\otimes l_{max}}\otimes\bra{x}) U (\ket{0}\otimes\ket{0}^{\otimes l_{max}}\otimes\ket{x}) \\
&= \sum_{a,b=0}^1\sum_{k,k'=0}^{L_{max}-1} \frac{(-1)^b \bra{b}\bra{k'}}{\sqrt{2L_{max}}} \\
& \qquad\qquad \frac{(-1)^{a}\ket{a \oplus ((k_{l_{s_x}}+x_{l_{s_x}})\ge 2^{l_{s_x}}))}\ket{k}}{\sqrt{2L_{max}}}\\
&=  \frac{1}{2L_{max}}\sum_{a=0}^1\sum_{k=0}^{L_{max}-1} (-1)^{a \oplus ((k_{l_{s_x}}+x_{l_{s_x}})\ge 2^{l_{s_x}}))} (-1)^a\\
&= \frac{1}{L_{max}}\sum_{k=0}^{L_{max}-1}  (-1)^{(k_{l_{s_x}}+x_{l_{s_x}})\ge 2^{l_{s_x}}}\\
&= \frac{1}{2^{l_{s_x}}}\sum_{k=0}^{2^{l_{s_x}}-1} (-1)^{(k+x_{l_{s_x}})\ge 2^{l_{s_x}}} \\
&= \sum_{k=0}^{2^{l_{s_x}}-x_{l_{s_x}}-1} \frac{1}{2^{l_{s_x}}}-\sum_{k=2^{l_{s_x}}-x_{l_{s_x}}}^{2^{l_{s_x}}-1} \frac{1}{2^{l_{s_x}}} = \frac{2^{l_{s_x}}-2x_{l_{s_x}}}{2^{l_{s_x}}}\\
&= 1 - 2\>\frac{x \bmod 2^{l_{s_x}}}{2^{l_{s_x}}}
\end{align*}

\begin{figure*}
\centering

\begin{tikzpicture}[scale=0.9]
\begin{yquant}

qubit {flag} flag;

qubit {carry$_{\The\numexpr\idx+1}$} anc[4];
qubit {$\ket{k}$} k;
qubit {unary iterator} ui;
qubit {$\ket{x}$} x;
qubit {control} c;
discard ui;

["north:$l_{max}$" {font=\protect\footnotesize, inner sep=0pt}]
slash k;
["north:$n$" {font=\protect\footnotesize, inner sep=0pt}]
slash x;

[multictrl]
box {compute carry} (flag) | c~  k , x  ;

text {$=$} (-);
discard anc;
slash k;
slash x;
init {$\ket{0}$} anc;
box {compute\\carries} ( k, x, anc, ui)  ;

[name=first]
phase {} x;
text {$\ket{0}$} ui;
cnot ui | x;
settype {qubit} ui;
cnot flag | anc[2], ui;
cnot ui | x;
cnot flag | anc[0], ui;
cnot ui | x;
[name=irreg]
cnot x | c;
cnot flag | anc[0], ui;
cnot ui | x;

cnot flag | anc[1], ui;
cnot ui | x;
cnot flag | anc[3], ui;
cnot ui | x;
discard ui;
[name=last]
phase {} x;

box {compute \\carries$^\dagger$} ( k, x, anc, ui);    

\end{yquant}
    \node[draw, fill=white, fit=(first) (irreg) (last), align=center] {%
        \begin{minipage}[t]{\linewidth}
            \vspace{0pt} 
            \centering
            variable unary iteration
        \end{minipage}
    };;
\node at (7.75,0.6) {\small $s_x=0$};
\node at (7.78,0.25) {\small $l_0=3$};
\node at (8.95,0.6) {\small $s_x=1$};
\node at (8.99,0.25) {\small $l_1=1$};
\node at (10.14,0.6) {\small $s_x=2$};
\node at (10.18,0.25) {\small $l_2=1$};
\node at (11.34,0.6) {\small $s_x=3$};
\node at (11.38,0.25) {\small $l_3=2$};
\node at (12.55,0.6) {\small $s_x=4$};
\node at (12.58,0.25) {\small $l_4=4$};
\end{tikzpicture}
\caption{\label{fig:compute_carry}Example of compute carry in Fig.~\ref{fig:be_circuit} for $l_0 = 3$, $l_1=1$, $l_2=1$, $l_3=2$ and $l_4=4$. The unary iteration over these segments to copy the correct carry bit into the flag register is performed by controlled variable unary iteration \cite{Sanders_2020}. Compared to Fig.~\ref{fig:be_circuit}, ancillas carry${}_1$ to carry${}_4$, and an ancilla unary iterator qubit are shown. The variable unary iteration contains further internal ancilla qubits.}
\end{figure*}
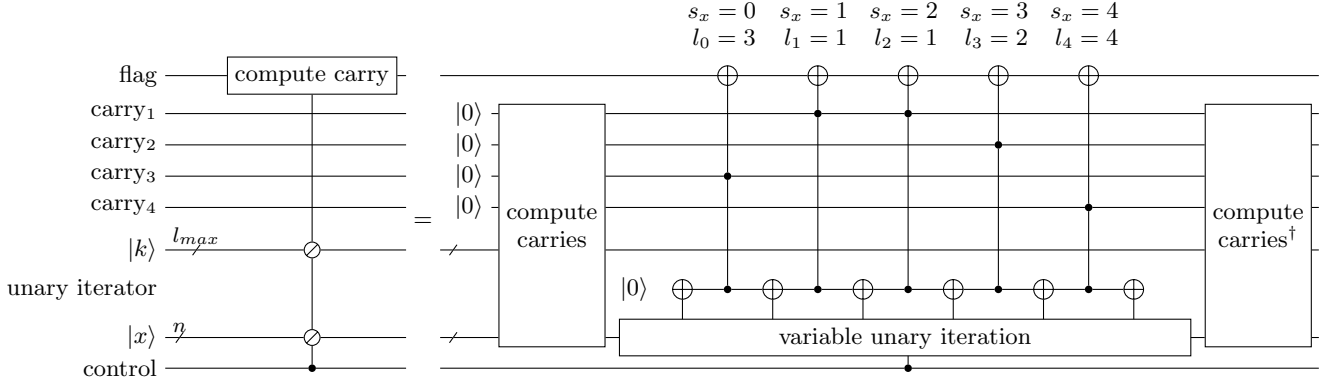
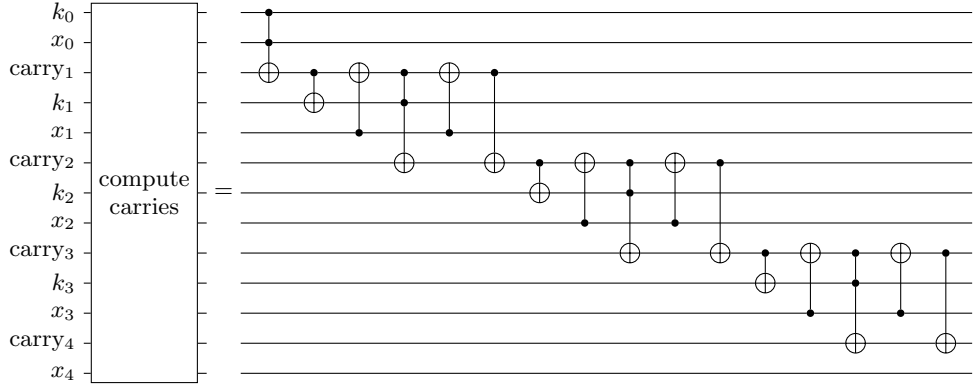
\begin{figure*}
\centering

\begin{tikzpicture}
\begin{yquant}
qubit {$k_0$} k0;
qubit {$x_0$} x0;
qubit {carry$_1$} carry0;
qubit {$k_1$} k1;
qubit {$x_1$} x1;
qubit {carry$_2$} carry1;
qubit {$k_2$} k2;
qubit {$x_2$} x2;
qubit {carry$_3$} carry2;
qubit {$k_3$} k3;
qubit {$x_3$} x3;
qubit {carry$_4$} carry3;
qubit {$x_4$} x4;

box {compute \\carries} (k0, k1, k2,x0, x1, x2, x3, k3, x4, carry0, carry1, carry2, carry3);
text {$=$} (-);
cnot carry0 | x0, k0;

cnot k1 | carry0;
cnot carry0 | x1;
cnot carry1 | carry0, k1;
cnot carry0 | x1;
cnot carry1 | carry0;

cnot k2 | carry1;
cnot carry1 | x2;
cnot carry2 | carry1, k2;
cnot carry1 | x2;
cnot carry2 | carry1;

cnot k3 | carry2;
cnot carry2 | x3;
cnot carry3 | carry2, k3;
cnot carry2 | x3;
cnot carry3 | carry2;
\end{yquant}
\end{tikzpicture}
\caption{\label{fig:adder}Example of adder in Fig.~\ref{fig:compute_carry} for $L_{max} = 16$, $N=32$. It is important to note that the construction of the adder leaves the $x$ registers unaffected, so the variable unary iteration can operate directly on the $x$ register next. $x_4$ is not involved in this operation as $l_{max}=4$ is less than $n=5$ and it is not necessary to compute carries for $l > l_{max}$. The qubits have been ordered differently to Fig.~\ref{fig:compute_carry} to make the ladder structure more apparent.}
\end{figure*}

The adder costs $l_{max}$ compute/uncompute Toffoli pairs, the controlled variable unary iteration costs $S-1$ compute/uncompute Toffoli pairs, and the copy operations cost $S$ Toffolis.  Therefore, the total cost for an $l_{max}$-compute carry, and hence the whole block encoding, is $l_{max} + 2S - 1$ Toffolis. As usual, we don't count uncomputation Toffolis, for which measurement based uncomputation can be used.
\subsection{Piecewise QSVT circuit}
\label{section:circuit}
Standard QSVT applies a single polynomial to all the singular values of a block encoding. Here we wish to instead apply a different polynomial to each segment of our block encoding:
\begin{equation}
\footnotesize
\setlength{\arraycolsep}{0.5pt}
\renewcommand{\arraystretch}{1}
\left(
\begin{array}{ccccccc|cc}
\scalebox{0.8}{$p_0(1)$} &       &       &       &       &       &       & \multirow{7}{*}{\Large $j_1$} & \\
       & \scalebox{0.8}{$\ddots$} &       &       &       &       &       & & \\
       &       & \scalebox{0.8}{$p_0\left(-1 + \frac{2}{L_0}\right)$} & & & & & & \\
       &       &       & \scalebox{0.8}{$\ddots$} & & & & & \\
       &       &       &       & \scalebox{0.8}{$p_{S-1}(1)$} & & & & \\
       &       &       &       &       & \scalebox{0.8}{$\ddots$} & & & \\
       &       &       &       &       &       & \scalebox{0.8}{$p_{S-1}\left(-1 + \frac{2}{L_{S-1}}\right)$} & & \\
\hline
\multirow{2}{*}{\Large $j_2$} &       &       &       &       &       &       & \multirow{2}{*}{\Large $j_3$} & \\
                              &       &       &       &       &       &       & & \\
\end{array}
\right)
\normalsize
\label{eq:qsvt_blockencoding}
\end{equation}

In QSVT the polynomial transformation performed is controlled by the choice of phase factor rotations. Hence by performing phase factor rotations conditionally depending on the values of the block registers we can apply different polynomial transformations to different segments of computational basis elements. As our block encoding is diagonal in the computational basis this allows us to apply different polynomials to different segments of our singular values.

Like the compute carry routine (Fig.~\ref{fig:compute_carry} and Section~\ref{section:linear_block}), we utilise variable unary iteration \cite{Sanders_2020} to efficiently apply different phase factor rotations for each section of computational basis elements. We achieve this by digitally pre-loading the phase factors into ancilla registers and then performing the rotations by controlled addition into a pre-prepared phase gradient register \cite{gidneyHalvingCostQuantum2018} as demonstrated by Fig.~\ref{fig:phase-gradient}. Rather than inserting Fig.~\ref{fig:phase-gradient} into the QSVT circuit for each phase factor rotation, we can pre-load all of the phase factors in a single variable unary iteration, at the expense of more ancillas: Each set of controlled rotations requires their own set of $\log \frac{1}{\epsilon}$ ancillas (where $\epsilon$ is the accuracy to which one wishes to perform the rotations). 
\begin{figure}
\centering

\begin{tikzpicture}
\begin{yquant}

qubit flag;
qubit {$x$} m;
qubit anc;
qubit {phase\\ gradient} pg;

["north:$n$" {font=\protect\footnotesize, inner sep=0pt}]
slash m;
["north:$\log 1/\epsilon$" {font=\protect\footnotesize, inner sep=0pt}]
slash anc;
["north:$\log 1/\epsilon$" {font=\protect\footnotesize, inner sep=0pt}]
slash pg;

[multictrl]
box {$R_z(\phi_{s_{x}})$} flag ~ m;

text {$=$} (-);
slash m;
slash anc;
slash pg;

[multictrl]
box {load $\phi_{s_{x}}$} anc ~ m;

box {add} (anc, pg) | flag;

[multictrl]
box {load $\phi_{s_{x}}$} anc ~ m;

\end{yquant}
\end{tikzpicture}
\caption{\label{fig:phase-gradient} Circuit demonstrating how the variable unary iteration controlled rotations can be implemented by digitally loading the phase factors and adding into a phase gradient register. This technique avoids Clifford+T compilation of rotations\cite{Bocharov_2015}, which would lead to a multiplicative factor $\log 1/\epsilon$ in T complexity.}
\end{figure}

Utilising variable unary iteration to control off sections of computational basis elements makes our algorithm more efficient than applying different polynomials based on an inequality test as proposed in \cite{mcardleQuantumStatePreparation2022} which would cost $Sn$ Toffoli gates compared to our $S-2$ Toffoli gates.

We have based our circuit off an indefinite parity version of QSVT \cite{sunderhauf2023generalizedquantumsingularvalue} which requires $2d$ queries to the block encoding and $2d + 1$ rotations to implement a $d$-degree indefinite parity polynomial. We present this variant as our later applications (Section~\ref{section:applications}) require indefinite parity. It would be equally possible to simply control the rotations of a standard QSVT circuit and restrict to definite parity polynomials \footnotemark[1].
\footnotetext[1]{{For definite parity polynomials it would be necessary to modify the piecewise exact linear block encoding to range from 0 to 1 rather than -1 to 1. This can be simply achieved by removing both Hadamard gates and the leftmost $X$ gate from the flag qubit in Fig.~\ref{fig:be_circuit}. Without this modification it would not be possible to approximate arbitrary states as the polynomial will be necessarily symmetric or anti-symmetric around 0.}}
Fig.~\ref{fig:circuit} demonstrates a piecewise indefinite parity QSVT circuit with $4$ iterations to implement $S$ different polynomials of degree 2 on $S$ different segments of length $2^{l_{s_x}}$.

Each step of the circuit requires $\log\frac{1}{\epsilon} -1$ compute/uncompute Toffoli pairs in order to perform the controlled rotation (via addition into a phase gradient register). The ancillas storing the phase factors must also be loaded/unloaded via variable unary iteration costing a total of $2(S-2)$ compute/uncompute Toffoli pairs. The remaining cost (not attributed to the inner block encoding) is due to the $l_{max}+1$-controlled Z gate in each step, costing a total of $2d(l_{max}+1)$ compute/uncompute Toffoli pairs.
\begin{figure*}[htbp]
\centering

\begin{tikzpicture}
\begin{yquant}

qubit {flag} f;
qubit {flags} flags;
qubit {$x$} n;
qubit {flag} flag1;

["north:$l_{max}$" {font=\protect\footnotesize, inner sep=0pt}]
slash flags;

["north:$n$" {font=\protect\footnotesize, inner sep=0pt}]
slash n;

[multictrl]
box {$R_X(\phi_{0,s_{x}})$} flag1 ~ n;

[operator style={/yquant/every negative control}]
phase {} flags ~  f, flag1;
[multictrlpos]
box {$U$} (flags,f, n)  ~ flag1;

[multictrl]
box {$R_X(\phi_{1,s_{x}})$} flag1 ~ n;

[multictrl]
box {$U$} (flags,f, n) | flag1 ;

[operator style={/yquant/every positive control}]
phase {} flag1  ~ flags, f;
[multictrl]
box {$R_X(\phi_{2,s_{x}})$} flag1 ~ n;
[operator style={/yquant/every negative control}]
phase {} flags ~  f, flag1;

[multictrlpos]
box {$U$} (flags,f, n)  ~ flag1;

[multictrl]
box {$R_X(\phi_{3,s_{x}})$} flag1 ~ n;

[multictrl]
box {$U$} (flags,f, n) | flag1 ;
[operator style={/yquant/every positive control}]
phase {} flag1  ~ flags, f;

[multictrl]
box {$R_X(\phi_{4,s_{x}})$} flag1 ~ n;

\end{yquant}
\end{tikzpicture}
\caption{Piecewise indefinite parity QSVT circuit with $4$ iterations to implement $S$ different polynomials of degree 2 on $S$ different segments of length $2^{l_{s_x}}$. The multiplexed rotations are implemented efficiently using the variable unary iteration subroutine described in Section~\ref{section:linear_block}.}
\label{fig:circuit}
\end{figure*}

\subsection{State preparation}
\label{section:state_prep}
Implementing the piecewise QSVT circuit produces the block encoding described by \eqref{eq:qsvt_blockencoding}. The state produced by applying this block encoding (piecewise QSVT circuit) to the uniform superposition is: 
\begin{equation}
    \frac{1}{\sqrt{N}}\sum_{x=0}^{N-1} p_{s_x}\left( 1 - 2\>\frac{x \bmod 2^{l_{s_x}}}{2^{l_{s_x}}}\right)  \ket{x}\ket{0}^{\otimes l_{max}+1} + \ket{\text{junk}}
\end{equation}
Here $\ket{\text{junk}}$ corresponds to all the terms where the flag qubits are not in the all zero state. Therefore, we can use amplitude amplification (reflecting around the all zero state for the flag qubits) to recover the desired state \eqref{eq:desired_state}. The number of amplitude amplification steps required to achieve this is given by:
\begin{equation}
    O\left(\sqrt{\frac{N}{\sum_{x=0}^{N-1} \left[p_{s_x}\left( 1 - 2\frac{x \mod 2^{l_{s_x}}}{2^{l_{s_x}}}\right)\right]^2}}\right)
\end{equation}
We always want to use polynomials that are as large as possible (under the restraint of condition (1)) to minimise the number of amplitude amplification steps, hence we fit polynomials $\tilde p_i$ to the normalised amplitudes of the desired state but actually implement polynomials $p_i = \frac{\tilde p_i}{\tilde p_{max}}$ (where $\tilde{p}_{max}$ is the maximum value of all the polynomials over the continuous region $[-1,1]$). As $\sum_{x=0}^{N-1} \left[ \tilde{p}_{s_x}\left( 1 - 2\>\frac{x \bmod 2^{l_{s_x}}}{2^{l_{s_x}}}\right)\right]^2 \approx 1$ due to normalisation of the desired state, the number of amplitude amplification steps is given by: $O(\tilde{p}_{max}\sqrt{N})$. 

For desired states with amplitudes given by samples from fixed functions, increasing $N$ corresponds to taking finer grained sampling of the function. As such the normalisation of such amplitudes will increase with $O(\sqrt{N})$. Hence if the maximum of the unnormalised polynomial approximation remains approximately the same with the finer sampling, the normalised maximum $\tilde{p}_{max}$ will scale with $O(\frac{1}{\sqrt{N}})$, and hence the number of amplitude amplification steps will scale with $O(1)$. This conclusion can similarly be reached by considering $\tilde{p}_{max} \sqrt{N}$ as the inverse ``discretized L2-norm filling fraction" of $\tilde{p}_i$.

One caveat is that as $\tilde{p}_{max}$ is taken over the continuous region it may be larger than $1$ even though all the normalised amplitudes are always less than 1. So for polynomials that oscillate wildly away from the sampled amplitudes this process is inefficient. It should be noted that fitting piecewise polynomials reduces the impact of the Runge Phenomena and hence polynomials produced by our approximations are less prone to this wild oscillation than in prior work. 

\subsection{Resource costs}
\label{section:resource}
The total cost of running a piecewise QSVT circuit to implement $S$ indefinite parity polynomials of degree $d$ with maximum segment size $2^{l_{max}}$ is $(2d+1)(\log\frac{1}{\epsilon}-1) + 2(S-2) + 2d(l_{max}+1) + 2d(l_{max} + 2S -1)$ Toffolis. 

When performing amplitude amplification with $A$ steps we repeat this circuit $2A$ times along with $2A$ single qubit rotations, $A$ times $(l_{max}+2)$-controlled Z gates and $A$ times $(n+l_{max}+2)$-controlled Z gates. Hence the cost of preparing a given $n$-qubit state is $O\left(\tilde{p}_{max}\sqrt{N}\max(n,dl_{max},d\log \frac{1}{\epsilon}, dS)\right)$ Toffolis. As discussed in Section~\ref{section:state_prep}, if the state we wish to load has amplitudes sampled from some function (such that increasing $N$ corresponds to taking a finer graining of the function), then the cost of preparing the $n$-qubit state is:
\begin{equation}
    O\left(\max\left(n,dl_{max},d\log \frac{1}{\epsilon}, dS\right)\right) \text{ Toffolis}
\end{equation}
In the case described in Section~\ref{section:nearly_analytic} we have $S = O(n)$, and hence this can be simplified to just:
\begin{equation}
    O\left(d\max\left(n, \log \frac{1}{\epsilon}\right)\right)  \text{ Toffolis}
\end{equation}
The algorithm acts on: $n$ qubits to store the desired state, $l_{max} + 1$ block encoding flag qubits, $(2d+1)\log\frac{1}{\epsilon}$ ancillas for digitally loading phase rotations, a QSVT flag qubit and an amplitude amplification ancilla qubit. Additional qubits are temporarily required: $n + l_{max}$ ancillas to perform the compute carry, $n$ ancillas to perform the variable unary iteration on $n$ qubits and $n+l_{max}+2$ ancillas to perform the $n+l_{max} +2$-controlled Z gate required in amplitude amplification. After accounting for the reuse of temporary ancillas, $2n + 2l_{max} + (2d+1)\log\frac{1}{\epsilon} + 5$ total qubits are required by this algorithm.

Above we assumed the use of clean ancillas to minimise the Toffoli count; space-time tradeoffs using conditionally clean or dirty qubits are possible \cite{Khattar:2024pqa, Nie:2024key}.
\section{Applications}
\label{section:applications}
Our algorithm allows more states to be prepared efficiently. Existing QSVT state preparation techniques required the desired state to be approximated by a single global polynomial. Our approach relaxes this restriction to allow approximation by piecewise polynomials. In Section~\ref{section:nearly_analytic} we will explore amplitudes sampled from functions which are hard to approximate with the former, but efficiently approximated by the later. In Section~\ref{section:b_spline} we discuss loading the piecewise polynomial B-spline window function and the exponential performance boost it lends to Quantum Phase Estimation. Then in Section~\ref{section:algorithm} we illustrate a classical algorithm for finding the optimal piecewise polynomial to approximately load a given generic state. 

\subsection{Approximating nearly-analytic functions}
\label{section:nearly_analytic}
It is well known that analytic functions admit a $d$-degree polynomial approximation with approximation error $O(\exp(-d))$ \cite{Trefethen}. Hence for such functions a single global polynomial will be sufficiently efficient. However, for functions with singularities on the interval we wish to approximate (e.g. $x^\alpha$ on $x,\alpha\in(0,1)$) it is not possible to find a single global polynomial that efficiently approximates the function. However, by utilising our technique with piecewise polynomials we can efficiently approximate some such functions with approximation error $O\left(\exp(-d)\right)$ as demonstrated in Sections~\ref{section:sqrt} and \ref{section:inverse}.

Our technique (similar to Section II E \cite{Sanders_2020}) for dealing with such situations  uses a cascade of exponentially smaller segments as we approach the singularity (e.g. $[2^{-i-1}, 2^{-i}]$ for a singularity at zero). This allows us to resolve the tricky behaviour up to a desired finite resolution $2^{-n}$ whilst only paying a logarithmic cost in number of segments $S=O(n)$. This technique will also be useful for many other states which exhibit large discontinuities, singularities or sharp cusps which would otherwise require a very high degree single polynomial to approximate.

In our examples we consider only the case with a single singularity at $0$ when attempting to approximate the range $[2^{-n},1]$. This can be easily extended without loss of generality to any number of singularities within the interval of interest by dividing the interval up either side of each singularity at the closest multiples of $2^{-n}$. On an $n$-qubit state only a single amplitude can lie in each of the remaining $2^{-n}$ sized intervals, so they can be covered by single 0-degree polynomial segments.

We now illustrate 2 explicit examples of how our technique produces exponential speed ups in loading common functions compared to naive state prepartion and single polynomial QSVT. We also show an improvement over black box techniques for quantum state preparation \cite{Sanders_2020, Bausch:2020wxy}. In general, the implementation of the black box oracle in each round of amplitude amplification can be costly: with quantum arithmetic evaluation of a piecewise polynomial this cost is $\Omega(n^2d)$ \cite{Haner:2018yea}, making our approach more efficient.
Whilst we are unaware of specific quantum algorithms requiring these states, the examples effectively illustrate the increased capability of our approach compared to employing a single polynomial approximation.

\subsubsection{$x^{\alpha}$}
\label{section:sqrt}
In prior work \cite{mcardleQuantumStatePreparation2022} on QSVT state preparation it was explicitly stated that ``one current drawback of our method is that it cannot efficiently prepare the state representing $\sqrt{x}$ for $x \in [0, 1]$".
Here we demonstrate that we can even approximately load the state $\frac{1}{\mathcal{N}} \sum_{x=0}^{2^n-1} (2^{-n}x)^{\alpha} \ket{x}$ for $\alpha \in (0,1)$ using our technique.

Theorem 8.1 in \cite{Trefethen} states that any function $f$ analytic and bounded by $M$ on an open Bernstein ellipse $E_{\rho}$ with semi-major axis $\frac{\rho+\rho^{-1}}{2}$ has a $d$-degree Chebyshev approximation $f_d$ satisfying:
\begin{equation}
    |f - f_d| \leq \frac{2M}{\rho - 1}\rho^{-d}
\end{equation}
Considering the $S=n$ segments $[2^{-i-1}, 2^{-i}]$ separately for $f=x^{\alpha}$, then the maximal open Bernstein ellipse we can draw on each segment has $\rho = 3+\sqrt{8}$. Therefore, we get the following bound on the error in each segment:
\begin{equation}
    |f - f_{d,i}| \leq \frac{2*2^{-\alpha i}}{\mathcal{N}(2+ \sqrt{8})} (3+\sqrt{8})^{-d}
\end{equation}
Therefore, the $d$-degree $n$-piecewise approximation $f_{d}$ satisfies the following bound for $[2^{-n}, 1]$ as $\frac{1}{\mathcal{N}} < 1$ for $n>1$.
\begin{equation}
    |f - f_{d}| \leq \frac{2}{2+ \sqrt{8}} (3+\sqrt{8})^{-d}
\end{equation}
Further, we can utilise a constant approximation of $0$ in the remaining segment $[0, 2^{-n})$. 

The normalisation $\mathcal{N}$ is given by:
\begin{equation}
    \sqrt{\sum_{x=0}^{2^n-1}(2^{-n} x)^{2\alpha}}  = O\left(\sqrt{\frac{N}{2\alpha+1}}\right)
\end{equation}
Hence, $\tilde{p}_{max} = O\left(\sqrt{\frac{2\alpha+1}{N}
}\right)$. Therefore, our technique can efficiently prepare $x^{\alpha}$ for $\alpha \in (0,1)$ on $n$-qubits with accuracy $\epsilon$ in $O(\sqrt{2\alpha} \log \frac{1}{\epsilon}\max(n, \log\frac{1}{\epsilon}))$ Toffolis --- providing exponential speed-up on naive state preparation. We note that a faster technique \cite{PhysRevLett.122.020502} exists for preparing the special case $\sqrt{x}$, however, that approach does not generalise efficiently to $x^{\alpha}$.

\subsubsection{$\log x$}
\label{section:inverse} To approximately load the state $\frac{1}{\mathcal{N}}\sum_{x=1}^{2^n-1} \log{2^{-n}x}\ket{x}$, we need to approximate the function $f(x) = \frac{1}{\mathcal{N}}\log{2^{-n}x}$ for $x \in [1, 2^n-1]$, or equivalently $\tilde f(x) =\frac{1}{\mathcal{N}}\log{x}$ for $x \in [2^{-n}, 1-2^{-n}]$. 

Here, we will use a Taylor rather than Chebyshev approximation. The Taylor series expansion of $\tilde f$ around $x=a$ is given by:
\begin{equation}
        \tilde f(x) = \sum_{k=0}^d \frac{\tilde{f}^{(k)}(a)}{k!}(x-a)^k + R_d(x,a)
\end{equation}
where the Taylor remainder is
\begin{align}
    R_d(x,a) &= \frac{\tilde{f}^{(d+1)}(x_*)}{(d+1)!}(x-a)^{d+1} \nonumber \\
            &= \frac{1}{\mathcal{N}} \frac{(-1)^{d-1}}{d+1} x_*^{-d-1}(x-a)^{d+1}
\end{align}
and $x_* \in (a,x)$.

We can split the range $[2^{-n+1}, 1]$ into segments $[2^{-i-1}, 3*2^{-i-2}]$ and $[3*2^{-i-2}, 2^{-i}]$ for $i \in \mathbb{Z}_{n-1}$, and then take a Taylor series expansion about $2^{-i-1}$ in the former segments and $2^{-i}$ in the latter. The maximal error $R_d$ occurs at $x_* = 2^{-i-1}$ and $x=3*2^{-i-2}$ for the former segments, whereas it occurs at $x_* = 3*2^{-i-2}$ and $x=3*2^{-i-2}$ for the latter segments. Therefore, the error in each segment is bounded by:
\begin{align}
    |R_{i,d}(x,a)| &\leq \frac{1}{\mathcal{N}(d+1)}  2^{(i+1)(d+1)-(i+2)(d+1)} \nonumber\\
                &= \frac{1}{\mathcal{N}} \frac{2^{-d-1}}{d+1}
\end{align}
Therefore, combining these into a single $2(n-1)$-piecewise Taylor series approximation $f_d$ achieves error bounded by:
\begin{equation}
     |\tilde f - f_{d}| \leq   \frac{1}{\mathcal{N}} \frac{2^{-d-1}}{d+1}
\end{equation}

As $\frac{|\log 2^{-n}|}{\mathcal{N}} < 1$, we have
\begin{equation}
    |\tilde f - f_{d}| \leq  \frac{1}{n(d+1)} 2^{-d-2}.
\end{equation}
The above does not cover the range $[0, 2^{-n+1})$, however, only two amplitudes lie in this range so they may be exactly approximated by a 1-degree polynomial, giving $S=2n-1$ total segments.

The normalisation $\mathcal{N}$ is given by:
\begin{equation}
    \sqrt{\sum_{x=0}^{2^n-1}\log^2 2^{-n} x} = O\left(\sqrt{N}\right)
\end{equation}
Hence, $\tilde{p}_{max} = O\left(\frac{n}{\sqrt{N}}\right)$. Therefore, preparing $\log{x}$ on $n$-qubits with accuracy $\epsilon$ costs $O(n \log \frac{1}{\epsilon}\max(n, \log\frac{1}{\epsilon}))$ Toffolis. 

\subsection{B-spline window}
\label{section:b_spline}
  The B-spline window function \cite{bspline} is a family of classical windowing functions constructed out of piecewise polynomials. The $m$-th B-spline $w_m(x)$ is obtained by $m$-fold self-convolution of the rectangular function:
  \begin{equation}
  \label{eq:self_conv}
      w_m(x) = \int_{-\infty}^{\infty} w_1(m z) w_{m-1}\left(\frac{m}{m-1}(x-z)\right) dz
  \end{equation}
  \begin{equation}
      w_1(x) = \begin{cases}
          1 & |x| < 2^{n-1}\\
          0 & |x| \geq 2^{n-1}
      \end{cases}
  \end{equation}
  Hence the $m$-th B-spline is an $m$-piecewise polynomial of degree $m-1$. As such it is immediately obvious that our technique can prepare the following state (with accuracy $\epsilon$) in $O(m^{\frac{5}{4}}\max(n,m, \log \frac{1}{\epsilon}))$ Toffolis:
  \begin{multline}
      \frac{1}{\mathcal{\tilde N}}\sum_{x=2^{n-1}}^{2^{n-1} -1} w_m(x)\ket{x} \approx \frac{m^{m+1}}{2^{n(m-\frac{1}{2})}}\left(\frac{\pi}{3m}\right)^{\frac{1}{4}}\sum_{x=2^{n-1}}^{2^{n -1}-1}\sum_{p=0}^m\\   \frac{(-1)^p \max\left(0,x - \left(p-\frac{m}{2}\right)\frac{2^n}{m}\right)^{m-1}}{p!(m-p)!}\ket{x}
  \end{multline}
  
  Crucially, the $m$-fold self-convolution \eqref{eq:self_conv} corresponds to self-multiplying in the Fourier domain, and hence the Fourier transform of the $m$-th B-spline window function is:
\begin{equation}
     \left(\frac{\pi}{3m}\right)^{\frac{1}{4}}\left(\frac{\sin \frac{x\pi}{m}}{ \frac{x\pi}{m}}\right)^m
\end{equation}
which has the useful property of being sharply peaked near $x=0$.

Efficiently preparing the B-spline window function will be immensely useful for improving quantum algorithms. One of the main subroutines of quantum algorithms is Quantum Phase Estimation (QPE) which is used for finding the eigenvalues of a given unitary. The biggest problem with QPE is its fat tails; the probability of measuring an inaccurate eigenvalue is non-negligible if the eigenvalue is not guaranteed to be exactly specified in binary at the resolution used. This is typically countered through preparing a window function in the ancillas before applying QPE and increasing the resolution of the phase estimation beyond the precision desired. The window function has a narrow profile in the Fourier domain and hence narrows the profile of the probability distribution output by the QPE algorithm. One method for measuring the effectiveness of the window functions is by how much the resolution of the phase estimation must be increased by (or equivalently the number of extra ancillas required) to reduce the probability of measuring an energy outside the confidence interval to below a given value $\delta$ \cite{Greenaway:2024jzs}. Extra ancilla qubits are included in the final prepared state, but unlike the base ancillas—which increase the resolution of the QPE—they serve instead to reduce the value of $\delta$. The current state of the art is the Kaiser window function \cite{Berry:2022ccu} which reduces the number of extra ancillas required to $O(\log \log \frac{1}{\delta})$. Thereom~\ref{theroem:b_spline} (proof in Appendix \ref{section:appendix_a}) demonstrates that the B-spline window achieves the same asymptotic complexity and hence provides a viable alternative to the Kaiser window: 

\begin{theorem}
\label{theroem:b_spline}
To achieve a given probability $\delta$ of measuring energy outside of a given confidence interval $\epsilon$ (when using $O(\log \frac{1}{\delta})$-th B-spline boosted QPE), one must use $O(\log \frac{1}{\epsilon} + \log \log \frac{1}{\delta})$ ancillary qubits.
\end{theorem}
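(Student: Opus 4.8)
The plan is to mirror the structure of the Kaiser-window analysis of \cite{Berry:2022ccu,Greenaway:2024jzs}, but with the B-spline window replacing the Kaiser window, exploiting the explicit Fourier transform $\left(\frac{\pi}{3m}\right)^{1/4}\left(\frac{\sin(x\pi/m)}{x\pi/m}\right)^m$ stated above. The quantity to control is the \emph{tail probability}: with $n_0$ base ancillas setting the QPE resolution $2^{-n_0}\sim\epsilon$, and $n_1$ extra ancillas, the probability of reporting a phase outside the confidence interval is governed by the weight of the window's Fourier transform outside a region of width $\sim 2^{n_1}$ (in units where the resolution is fixed). So first I would set up the standard QPE-with-window bound: the error probability $\delta$ is upper bounded (up to $O(1)$ and low-order factors) by the ratio of $\int_{|x|>x_0}|\hat w_m(x)|^2\,dx$ (the tail) to $\int_{-\infty}^\infty|\hat w_m(x)|^2\,dx$ (the normalisation), where $x_0$ scales linearly with $2^{n_1}$. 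I would cite the relevant lemma from \cite{Greenaway:2024jzs} (or re-derive it in a line) so that the problem is reduced purely to estimating these two integrals for $\hat w_m(x)=\big(\tfrac{\sin(x\pi/m)}{x\pi/m}\big)^m$ (the prefactor cancels in the ratio).

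Next I would estimate the two integrals. The denominator is $\Theta(m)$ by a straightforward computation — $\int_{\mathbb R}\big(\tfrac{\sin(u)}{u}\big)^{2m}du = \Theta(1/\sqrt m)$ after the substitution $u=x\pi/m$, giving $\Theta(\sqrt m)$ overall, or one can just note it is bounded below by the contribution of the central lobe $|x|\lesssim m$, which is $\Omega(m)$ after accounting for the $\Theta(1/\sqrt m)$ height-squared near a broad peak; the precise constant is irrelevant. For the numerator I would use $|\sin(x\pi/m)|\le 1$, so $|\hat w_m(x)|^2 \le (m/(x\pi))^{2m}$, and integrate $\int_{x_0}^\infty (m/(x\pi))^{2m}\,dx = \frac{m}{\pi(2m-1)}\big(\tfrac{m}{x_0\pi}\big)^{2m-1}$. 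Combining, $\delta \lesssim \text{poly}(m)\cdot\big(\tfrac{m}{x_0\pi}\big)^{2m-1}$. Solving for $x_0$: choosing $m = \Theta(\log\frac1\delta)$ and $x_0 = \gamma m$ for a fixed constant $\gamma>e/\pi$ (so that $m/(x_0\pi)=1/(\gamma\pi)<1/e$) makes $(m/(x_0\pi))^{2m-1}$ decay like $e^{-\Theta(m)}=e^{-\Theta(\log(1/\delta))}$, which beats the poly$(m)$ prefactor and yields $\delta$ as required. Since $x_0 = \Theta(m) = \Theta(\log\frac1\delta)$ and $x_0 \sim 2^{n_1}$, this gives $n_1 = O(\log\log\frac1\delta)$.

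Finally I would assemble the ancilla count. The base register needs $n_0 = O(\log\frac1\epsilon)$ qubits to reach resolution $\epsilon$; the extra register needs $n_1 = O(\log\log\frac1\delta)$ by the above; and one must check that the window state itself, being the $m$-th B-spline with $m=O(\log\frac1\delta)$, fits in these $O(\log\frac1\epsilon+\log\log\frac1\delta)$ qubits and that the confidence-interval width is preserved (i.e. the central lobe of $\hat w_m$ does not spread the distribution by more than a constant factor of the resolution — this follows since the lobe has width $\Theta(m)$ in $x$, matching $x_0$ up to the constant $\gamma$). Summing, the total is $O(\log\frac1\epsilon + \log\log\frac1\delta)$ ancillary qubits. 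The main obstacle I expect is the first step — pinning down the precise QPE tail-probability inequality in terms of the window's Fourier tail with the correct scaling of $x_0$ against $2^{n_1}$, including the discretisation/aliasing effects that distinguish the true QPE distribution from the idealized continuous Fourier picture; once that reduction is clean, the integral estimates are routine.
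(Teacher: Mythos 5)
Your proposal follows essentially the same route as the paper's Appendix~A proof: it bounds the tail of the $\left(\frac{\sin((x-E_i)\pi/m)}{(x-E_i)\pi/m}\right)^m$ QPE output distribution via $|\sin|\le 1$ and an integral comparison, estimates the normalisation by a Gaussian (Laplace) approximation of the central peak, sets $m=\Theta(\log\frac{1}{\delta})$ with the confidence interval at $\Theta(m)$ grid points, and counts $O(\log\frac{1}{\epsilon}+\log m)$ ancillas. The only blemish is your inconsistent estimate of the denominator ($\Theta(m)$ and $\Omega(m)$ versus the correct $\Theta(\sqrt{m})$ for $\mathcal{N}^2$, which the paper states as $\mathcal{N}\approx(3m/\pi)^{1/4}$), but this does not affect the conclusion since any polynomial-in-$m$ normalisation is dominated by the exponentially decaying tail.
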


Fig.~\ref{fig:bspline-ancilla} numerically demonstrates this doubly exponential decay. The exponent for the Kaiser window is larger than the B-Spline window, however, even with a very small number of extra ancillas both windows achieve such a small $\delta$ that it will likely be outweighed by other considerations such as logical error floor \cite{Acharya:2024btg}.

We compare the cost of preparing the B-spline window using our technique (with the improvements outlined in Section~\ref{section:improve-be}) to the cost of preparing the Kaiser window using the best technique known to us \cite{mcardleQuantumStatePreparation2022, Greenaway:2024jzs}. Fig.~\ref{fig:bspline-cost} demonstrates that our technique provides a 50 fold decrease in preparation cost when using 4 extra ancillas. The code for producing both these figures can be found here: \cite{o_brien_2025_14794186}. Therefore, in the early fault tolerant regime (where improvements in $\delta$ beyond experimental constraints cannot be realised) utilising the B-Spline window via our technique provides a significant reduction in cost for performing Quantum Phase Estimation.

It should be noted that in order to successfully prepare the B-spline window using our technique $m$ must be a power of 2, which happily corresponds to including an integer number of extra ancilla qubits.

The B-Spline function is an example of finding a piecewise polynomial that produces a desired property (being sharply peaked in the Fourier domain). Therefore, our technique offers a general framework by which one can efficiently prepare states with desired properties through careful choice of piecewise polynomials. This is potentially more powerful than simply approximating existing states known to have the desired property.
\begin{figure}
    \centering
    \includegraphics[width=\linewidth]{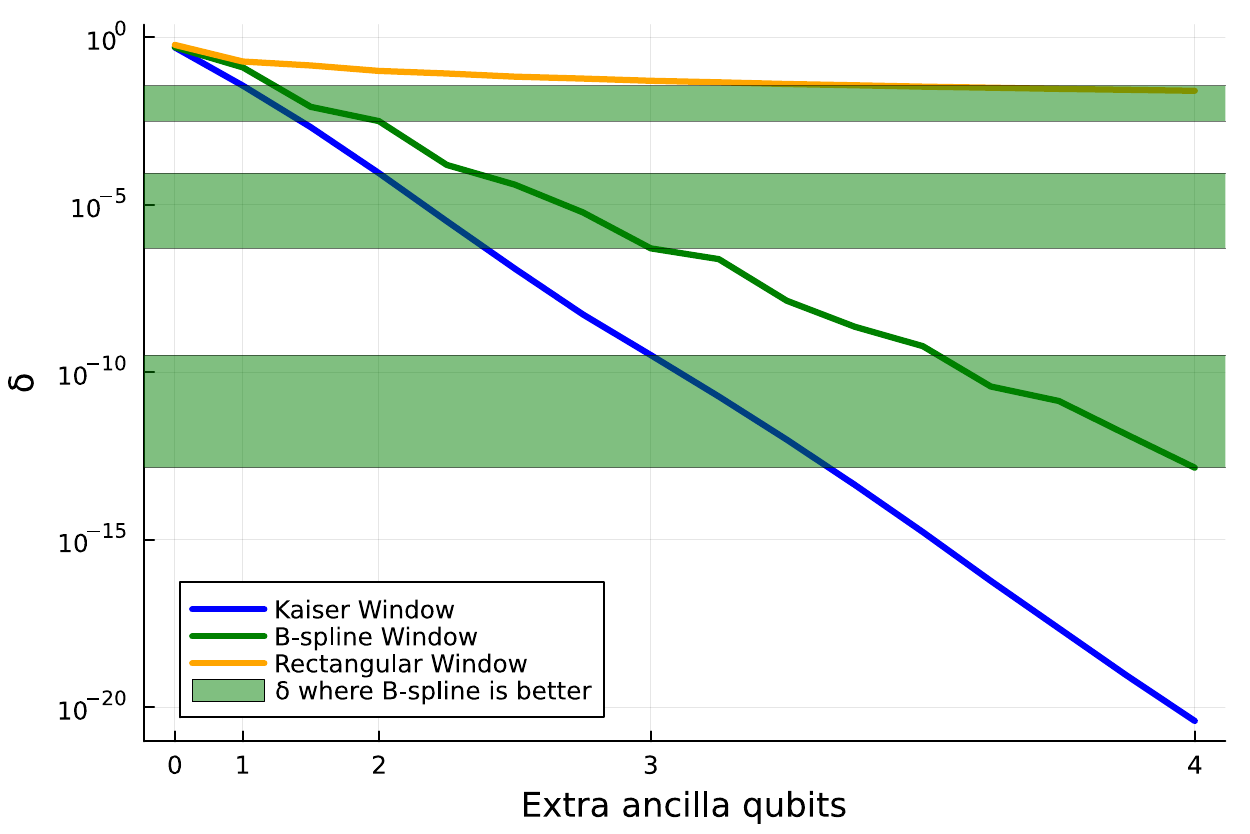}
    \caption{Numerical results demonstrating $O(\log \log \frac{1}{\delta})$ dependence of extra ancilla qubits for QPE when using both the best Kaiser Window and the best B-Spline window \protect\footnotemark[2]. Comparison with the rectangular window demonstrates the improvement over ``unboosted" QPE. Both the Kaiser Window and B-Spline window demonstrate an exponential improvement, though the Kaiser window has a slightly larger exponent. The green regions indicate target error rates of $\delta$ for which the same number of extra ancillas are required for both the Kaiser window and the B-Spline window. In these regions it is optimal to use the B-Spline window due to its lower preparation cost. This figure includes calculations for fractional qubit counts as it is possible to calculate $\delta$ as if we used a confidence interval not equal to a power of 2.}
    \label{fig:bspline-ancilla}
\end{figure}
\footnotetext[2]{The tail probability depends only very weakly on the number of base qubits which correspond to how finely the window function is sampled. Hence, this figure is expected to be very similar for all choices of number of base qubits. This particular figure was generated for 10 base qubits.}
\begin{figure}
    \centering
    \includegraphics[width=\linewidth]{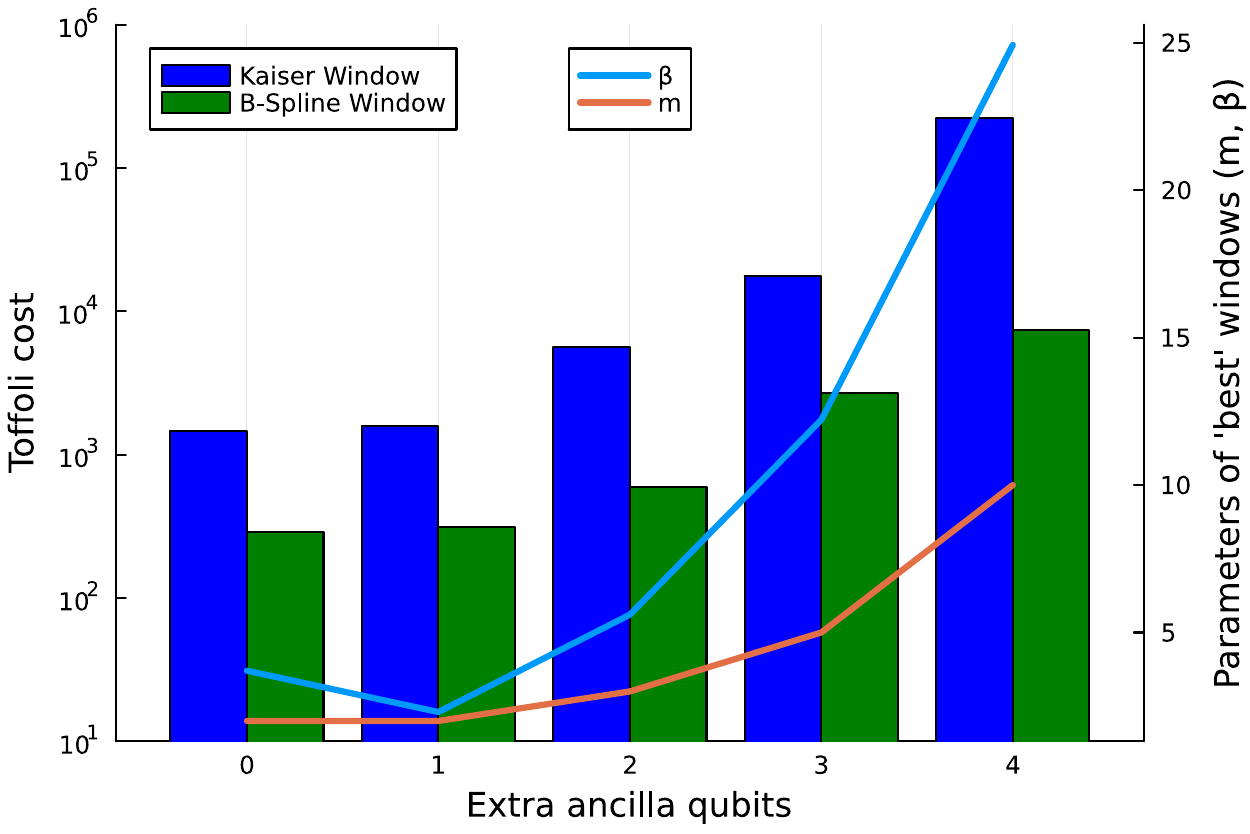}
    \caption{Numerical results demonstrating Toffoli cost of preparing the ``best" $\beta$-Kaiser and $m$-th B-Spline windows for a given number of extra ancilla qubits. The parameters ($\beta$ and $m$) of the windows are chosen to minimise the probability of measuring the wrong energy when performing QPE with 25 base qubits. Kaiser window is costed using the state-of-the-art approach from Appendix E of \cite{Greenaway:2024jzs} (derived from \cite{mcardleQuantumStatePreparation2022}), whereas the B-spline window is costed using our piecewise QSVT according the formulas given in \ref{section:resource} incorporating the improvements from \ref{section:improve-be}. These results demonstrate up to 50x lower cost for the B-spline window.}
    \label{fig:bspline-cost}
\end{figure}
\begin{table}[hbtp]
            \centering
            
            \begin{tabular}{|c|c|}
            \hline
            Function & Toffoli cost \\ \hline\hline
                \rule{0pt}{1.25em}$\left(\frac{x}{N}\right)^{\alpha}, \alpha \in (0,1)$ &  $O\left(\sqrt{2\alpha} \log \frac{1}{\epsilon}\max(n, \log\frac{1}{\epsilon})\right)$ \\[0.5em] \hline
                \rule{0pt}{1.25em}$\log \frac{x}{N}$ &  $O\left(n\log \frac{1}{\epsilon}\max(n, \log\frac{1}{\epsilon})\right)$ \\[0.5em] \hline
                \rule{0pt}{1.25em}$m$-th B-spline window & $O\left(m^{\frac{5}{4}}\max \left(n, \log \frac{1}{\epsilon}\right)\right)$ \\[0.5em] \hline
            \end{tabular}
            
            \caption{\label{tab:resource_costs}Asymptotic Toffoli costs for loading these functions into the amplitudes of states on $n$ qubits with an error tolerance of $\epsilon$. The tail probability $\delta$ when utilising the $m$-th B-spline window in QPE is given by $O(\exp (-m))$ as shown by Lemma~\ref{lemma:bspline_prob}.}
        \end{table}

\subsection{Generic optimal algorithm}
\label{section:algorithm}
We present Algorithm~\ref{alg:piecewise_classical_approx} for finding the optimal splitting of the amplitudes into segments to be approximated that runs in $O(N \log_2 N)$ time. It takes a fixed degree $d$ and maximum error tolerance $\epsilon$ and greedily finds the largest segment that a polynomial can approximate sequentially. This produces the smallest possible number of segments that satisfy the constraints, as for any given data point all the possible larger segments it lies in will have been considered and rejected by the algorithm. 

Section~\ref{section:nearly_analytic} demonstrated upper bounds on the cost of approximating certain states by piecewise polynomials. In reality the chosen segment sizes are sub-optimal and we should instead run the algorithm presented here upon these states to find the optimal divisions into segments. The resulting approximation will improve upon the results demonstrated in Section~\ref{section:nearly_analytic}. 

Our algorithm utilises an \textsc{\footnotesize ErrPolyApprox} subroutine which finds a good polynomial approximation to the given set of amplitudes and returns the $L^\infty$ norm error of the approximation. We utilised least squares fitting to achieve this goal, but other approaches are feasible (e.g. polynomial frame approximations or regularised least squares). If it is known that the amplitudes are sampled from a particular function then other methods such as the Remez algorithm might be suitable.
\begin{algorithm}[hbtp]
\caption{Performs at most $N\log_2 N$ queries to \textsc{ErrPolyApprox} and returns the optimal cuts between the pieces of the polynomial}
\label{alg:piecewise_classical_approx}
\KwIn{$d$, $\epsilon$ (target error), $N$, amplitudes}
\KwOut{cuts between polynomial segments}

cuts $\gets$ [ ]\;
$RHS \gets N$\;
\While{$RHS > 0$}{
    $l \gets \max \left( l \text{ s.t. } 2^l \mid RHS \right)$\;
    $LHS \gets \max(RHS - 2^l, 0)$\;
    \While{\scriptsize$\textsc{ErrPolyApprox}(\text{amplitudes}[LHS{:}RHS]) > \epsilon$}{
        $LHS \gets \frac{LHS + RHS}{2}$\;
    }
    cuts.append($LHS$)\;
    $RHS \gets LHS$\;
}
\Return cuts\;
\end{algorithm}
\section{Further improvements}
\subsection{Block Encoding}
\label{section:improve-be}We can make our piecewise linear block encoding more efficient in some scenarios depending upon the sizes of segments we desire. If all the segments have the same size then no variable unary iteration is needed and a single Toffoli from the relevant carry to the flag qubit will suffice. This is also true if there is only one segment in which case our technique reproduces the prior work \cite{mcardleQuantumStatePreparation2022} with a new inner block encoding. In both these cases the cost of our inner block encoding is just $l_{max} + 1$ Tofollis. This is significantly cheaper than inner block encodings introduced by previous work such as the sin \cite{mcardleQuantumStatePreparation2022} and approximate linear \cite{Gonzalez-Conde:2023fjg} blocking encodings as it avoids any compilation into Clifford+T of rotation gates\footnotemark[3]\footnotetext[3]{To compare like for like we must slightly modify our block encoding to range from 0 to 1. This can be simply achieved by removing both Hadamard gates and the leftmost $X$ gate from the flag qubit in Fig.~\ref{fig:be_circuit}.}. Our block encoding does require more ancillas than the sin block encoding, but these ancillas will likely be required anyway by the subsequent algorithm so are well worth the $\approx 15$ times reduction in Toffolis.

Another efficiency saving can be made to the block encoding if there are only a small $k$ number of unique segment sizes. Here we could utilise variable unary iteration to load a label for each segment size into an ancilla register before performing QSVT and then control off this register (instead of the block register) when copying the carries in compute carry. This would reduce the cost of the block encoding in each step of QSVT to $l_{max} + 2 k - 1$.

\subsection{Priors} We can reduce the number of amplitude amplification steps by applying our block encoding to an efficiently loaded prior distribution as in \cite{Lemieux:2024pmt, Babbush_2019}. By applying our QSVT circuit to the uniform superposition we are implicitly choosing a uniform prior. If we instead applied our circuit to the state $\sum_x c_x \ket{x}$, then we would use polynomial approximations to $a_x/c_x$ (where $a_x$ are the amplitudes of the state we wish to approximate). If $c_x$ is a good approximation then $a_x/c_x$ will be closer to the uniform distribution than $a_x$, so $\tilde{p}_{max}$ will be smaller and hence the number of amplitude amplification rounds will be reduced. This allows us to incorporate all the improvements demonstrated in \cite{Lemieux:2024pmt} achieved by utilising efficiently preparable reference states as priors.

\section{Conclusion}
We have presented a new algorithm for preparing states that, in some explicit instances, demonstrates an exponential speedup over naive state preparation. In particular, our approach can handle nearly-analytic functions with cusps and singularities. Furthermore, we have demonstrated that the B-spline window function provides a viable alternative to the state-of-the-art Kaiser window for boosting QPE. We also present numerical results demonstrating that the B-spline function (implemented via our algorithm) is 50 times cheaper to prepare than the Kaiser window (implemented via existing state-of-the-art methods).

We expect that our algorithm in combination with the B-spline window will be adopted by early fault tolerant quantum phase estimation protocols when it will be crucial to reduce the Toffoli cost of circuits. QPE is a prolific component of numerous quantum algorithms and hence the impact of this improvement will be very widespread.

Further, our algorithm will prove useful for loading states with amplitudes drawn from smooth functions with sharp discontinuities. For example, loading a 2D or 3D grid of smooth data into the amplitudes of the 1D computational basis. This is a problem commonly encountered in Quantum Computational Fluid Dynamics \cite{Lapworth:2022rcw}.

In classical computer science, splines (piecewise polynomials) are an incredibly popular and powerful approximation technique and this is likely to be reflected in quantum computing. Hence, our algorithm which allows the preparation of spline states will likely find broad spread application.

We have only demonstrated our approach to loading states with real valued amplitudes. However, it is simple to extend our results to states with complex amplitudes. The only change we need to make is that now we approximate these states using piecewise polynomials with complex coefficients. 

The downside of our approach is that our block encoding requires more flag and ancilla qubits than the prior work. We require $2n +2l_{max} +5$ qubits compared to $n+4$ required by \cite{mcardleQuantumStatePreparation2022}. We argue that this extra qubit cost is acceptable for state preparation as it is highly likely that any subsequent algorithm will reuse these ancillas plus more so it will not increase the cost of the total quantum algorithm.

Additionally, it is relevant to note that in the regime of small quantum states (fewer than 9 base qubits), the method of Low, Kliuchnikov, and Schaeffer (LKS) for arbitrary state preparation \cite{lowTradingTgatesDirty2018} would be more efficient than our approach and that of McArdle et al. \cite{mcardleQuantumStatePreparation2022}. However, the cost of the LKS method increases rapidly with system size, making it less efficient than QSVT-based techniques even for moderately large states.

It is important to note that we do not apply QSVT piecewise to different segments of eigenvalues, but rather to different segments of Hilbert space enumerated by the computational basis elements. Here the later acts as the former because the eigenspaces of our block encoding align with the computational basis and the relationship between the eigenvalues and computational basis is known. In general this is not true and to perform different QSVT polynomial transformations to different sections of eigenvalues of an arbitrary block encoding would be much more costly and rely on some eigenvalue thresholding or windowing technique.
\begin{acknowledgments}
We are grateful to Bjorn Berntson for very helpful discussions regarding polynomial approximations and to Vlad Gheorghiu and his team for comments on the manuscript. This work was partially funded by softwareQ~Inc.
\end{acknowledgments}
\twocolumngrid
\bibliographystyle{quantum}
\bibliography{references}
\appendix
\section{B-Spline Boosted Quantum Phase Estimation}
\label{section:appendix_a}
Here we present a more detailed explanation of how the B-spline window function can be utilised to improve Quantum Phase Estimation.

Given an initial state $\ket{\psi_i}$ that is an eigenstate of $U$ with eigenvalue $E_i$. Standard QPE progresses by applying $U$ to the initial state conditional upon the value of $l$ ancilla qubits initialised in a uniform superposition. If instead the $l$ ancilla qubits are initialised with the B-spline window function then after this step we would have:
\begin{equation}
    \frac{1}{\tilde{\mathcal{N}}}\sum_{x=-2^{l-1}+1}^{2^{l-1}} w_{m}(x) \ket{x} e^{iE_i x} \ket{\psi_i}
\end{equation}
The next step of QPE is to apply the inverse Quantum Fourier Transform. The effect of this can be easily evaluated as multiplying by $e^{iE_ix}$ before a Fourier transform is equivalent to shifting by $E_i$ after a Fourier transform. As the Fourier transform of the $m$-th B-spline $w_m$ is given by \cite{bspline}, we have:
\begin{equation}
\label{eq:ftb-spline}
        \frac{1}{\mathcal{N}} \sum_{k=-2^{l-1}+1}^{2^{l-1}} \left(\frac{\sin \frac{(k-E_i)\pi}{m}}{ \frac{(k-E_i)\pi}{m}}\right)^m \ket{k}\ket{\psi_i}
\end{equation}
As this distribution is sharply spiked near $k=E_i$, if we measure the ancilla register we are highly likely to get a value close to $E_i$.

We can prove this by bounding the tail probabilities, which first requires that we approximate $\mathcal{N}$.
\begin{lemma}
    Let $\mathcal{N}$ by defined as:
    \begin{equation}
       \mathcal{N} := \sqrt{\sum_{k=-2^{l-1}+1}^{2^{l-1}} \left|\frac{\sin \frac{(k-E_i)\pi}{m}}{ \frac{(k-E_i)\pi}{m}}\right|^{2m}}
    \end{equation} 
    Then, 
    \begin{equation}
        \mathcal{N} \approx \left(\frac{3m}{\pi}\right)^{\frac{1}{4}}
    \end{equation}
    \label{lemma:bspline_norm}
\end{lemma}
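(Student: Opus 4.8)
\emph{Proof strategy.} The plan is to replace the finite sum defining $\mathcal N^2$ by an integral over $\mathbb R$ and then evaluate that integral by Laplace's method for large $m$. Write $s(u):=\sin(u)/u$ and $g(y):=s(\pi y/m)^{2m}$, so that $\mathcal N^2=\sum_{k=-2^{l-1}+1}^{2^{l-1}}g(k-E_i)$ (the absolute value in the statement is redundant since the exponent $2m$ is even).

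\textbf{Step 1: the sum equals an integral.} The function $s(\pi y/m)$ is, up to the constant factor $m$, the Fourier transform of $\mathbf 1_{[-1/(2m),\,1/(2m)]}$; hence $g$, being its $2m$-th power, has Fourier transform equal to the $2m$-fold convolution of $m\,\mathbf 1_{[-1/(2m),\,1/(2m)]}$, which is supported on $[-1,1]$ and, being a rescaled B-spline of order $2m\ge 2$, is continuous and therefore vanishes at $\pm1$. Poisson summation then gives $\sum_{k\in\mathbb Z}g(k-E_i)=\sum_{j\in\mathbb Z}\hat g(j)\,e^{-2\pi i j E_i}=\hat g(0)=\int_{-\infty}^{\infty}g(y)\,dy$ with no error term, even for non-integer $E_i$. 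To pass from this bi-infinite sum (and from $\int_{\mathbb R}$) to the truncated versions in the lemma, I would bound the omitted terms using $|s(\theta)|\le\min(1,1/|\theta|)$: the tails are controlled by a convergent series of size $O\bigl((m/(\pi 2^{l-1}))^{2m}\bigr)$, which is negligible compared with $\mathcal N^2\sim\sqrt{3m/\pi}$ whenever $2^{l-1}$ is a sufficiently large multiple of $m$ (the relevant QPE regime), and similarly for the tail of the integral.

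\textbf{Step 2: evaluate the integral.} Substituting $u=\pi y/m$ gives $\int_{-\infty}^{\infty}g(y)\,dy=\frac{m}{\pi}\int_{-\infty}^{\infty}s(u)^{2m}\,du$. I would split this at a fixed small $\delta>0$. For $|u|\ge\delta$ one has $|s(u)|\le s(\delta)<1$ (since $s$ is decreasing on $(0,\pi)$ and $|s|\le 1/|u|$ beyond), so $\int_{|u|\ge\delta}s(u)^{2m}\,du\le s(\delta)^{2m-2}\int_{\mathbb R}s(u)^2\,du=\pi\,s(\delta)^{2m-2}$, exponentially small in $m$. For $|u|<\delta$ I would use $\ln s(u)=-u^2/6-u^4/180-O(u^6)$, so that $s(u)^{2m}=e^{-mu^2/3}\bigl(1-\tfrac{m}{90}u^4+\cdots\bigr)$; after rescaling $u\mapsto u/\sqrt m$ the Gaussian integral contributes $\sqrt{3\pi/m}$ and the quartic-and-higher terms contribute a relative error $O(1/m)$. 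Hence $\int_{\mathbb R}g=\frac{m}{\pi}\sqrt{3\pi/m}\,(1+O(1/m))=\sqrt{3m/\pi}\,(1+O(1/m))$, and taking square roots yields $\mathcal N=(3m/\pi)^{1/4}(1+O(1/m))$, which is the claimed approximation.

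\textbf{Main obstacle.} The delicate part is Step 2: one has to show rigorously that the remainder of the Taylor expansion of $\ln s$ on $|u|<\delta$, together with the boundary contributions from cutting the integral at $|u|=\delta$, really do combine into an $O(1/m)$ relative error uniformly in $m$ --- a standard but slightly technical Laplace / Watson's-lemma argument. By contrast, Step 1 is essentially routine once one recognises the compact Fourier support of $s(\pi\cdot/m)^{2m}$, and the dependence on the number of ancilla qubits $l$ enters only through the exponentially small truncation terms, so the stated approximation should be understood as holding once $2^{l-1}$ is a large enough multiple of $m$.
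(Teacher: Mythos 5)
Your proposal is correct, and its core calculation is the same as the paper's: the paper also Taylor-expands $m\log\bigl(\sin\theta/\theta\bigr)$ at the peak to get $-\,(k-E_i)^2\pi^2/(6m)+O((k-E_i)^4)$, approximates the central lobe by the Gaussian $e^{-x^2\pi^2/(6m)}$, and reads off $\mathcal N^2\approx\int_{-\infty}^{\infty}e^{-x^2\pi^2/(3m)}\,dx=\sqrt{3m/\pi}$ --- exactly your Step 2, stated heuristically (the lemma itself only claims an ``$\approx$''). Where you genuinely go beyond the paper is Step 1: the paper silently replaces the finite sum by an integral over $\mathbb R$, whereas you observe that $s(\pi y/m)^{2m}$ is band-limited to $[-1,1]$ (its Fourier transform being a $2m$-fold convolution of rescaled indicators, i.e.\ a B-spline vanishing at $\pm1$), so Poisson summation makes the bi-infinite sum \emph{exactly} equal to the integral for any shift $E_i$, and the truncation to $2^l$ terms only costs an exponentially small tail $O\bigl((m/(\pi 2^{l-1}))^{2m}\bigr)$. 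That buys a rigorous error statement, with relative error $O(1/m)$ from the Laplace expansion plus exponentially small truncation terms, at the price of the Watson's-lemma bookkeeping you flag; the paper's version buys brevity, which is all that is needed since only the leading asymptotics of $\mathcal N$ feed into Lemma~\ref{lemma:bspline_prob} and Theorem~\ref{theroem:b_spline}.
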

\begin{proof}
    We utilise the same technique as \cite{Berry:2022ccu} used to approximate the normalisation of the Kaiser window, whereby we approximate the centre of the distribution by a Gaussian. This Gaussian is found by Taylor expanding the logarithm of the distribution:
    \begin{align}
                \log \left(\frac{\sin \frac{(k-E_i)\pi}{m}}{ \frac{(k-E_i)\pi}{m}}\right)^m &= m \log \frac{\sin \frac{(k-E_i)\pi}{m}}{ \frac{(k-E_i)\pi}{m}}\\
                &= - \frac{(k-E_i)^2 \pi^2}{6m} + O\left((k-E_i)^4\right)
    \end{align}
    Therefore, we can approximate the centre of the distribution by $e^{ - \frac{x^2 \pi^2}{6m}}$ and so we can utilise the normalisation for the Gaussian $\int_{-\infty}^{\infty}\left|e^{ - \frac{x^2 \pi^2}{6m}}\right|^2 dx = \sqrt{\frac{3m}{\pi}}$ to approximate the normalisation for our distribution.

\end{proof}
Now we must prove that the tail probability decreases exponentially as we increase $m$.
\begin{lemma}
    Let $\delta$ be the probability of measuring an energy $x$ outside of the confidence interval of $E_i\pm m$ when performing $m$-th B-spline boosted QPE. Then,
    \begin{equation}
        \delta = O\left(\exp (-m)\right)
    \end{equation}
    \label{lemma:bspline_prob}
\end{lemma}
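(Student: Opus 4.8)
The plan is to bound the tail probability $\delta = \sum_{|k-E_i| > m} |a_k|^2$, where $a_k \propto \left(\operatorname{sinc}\frac{(k-E_i)\pi}{m}\right)^m$ and the proportionality constant is $1/\mathcal{N}$ with $\mathcal{N} \approx (3m/\pi)^{1/4}$ from Lemma~\ref{lemma:bspline_norm}. Without loss of generality I would shift coordinates so $E_i = 0$ and write $y = k/m$, so that a tail term is $\left(\frac{\sin \pi y}{\pi y}\right)^{2m}$ with $|y| > 1$. The key elementary fact is that for $|y| \ge 1$ we have $\left|\frac{\sin \pi y}{\pi y}\right| \le \frac{1}{\pi |y|} \le \frac{1}{\pi}$, so each tail term is at most $\pi^{-2m}$ times a factor that decays in $|y|$. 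Since $\pi > e$, the factor $\pi^{-2m}$ already gives exponential decay in $m$; the remaining job is just to control the sum over the (at most $2^{l-1}$) tail indices without losing the exponential.

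The main steps, in order: (i) invoke Lemma~\ref{lemma:bspline_norm} to replace $1/\mathcal{N}^2$ by $O(\sqrt{m})$ (a polynomial prefactor that cannot spoil an exponential bound); (ii) split the tail sum at $|k| = m$ and bound $\sum_{|k|>m} \left|\frac{\sin(k\pi/m)}{k\pi/m}\right|^{2m}$. For the "near tail" ($m < |k| \le 2m$, say) bound each term by $\pi^{-2m}$ and note there are only $O(m)$ such terms, contributing $O(m \pi^{-2m}) = O(e^{-m})$. For the "far tail" ($|k| > 2m$) use $\left|\frac{\sin(k\pi/m)}{k\pi/m}\right|^{2m} \le (m/(\pi |k|))^{2m} \le (m/(\pi|k|))^{2}$ (since the base is $< 1/2$ and $2m \ge 2$), giving a convergent sum $\sum_{|k|>2m} (m/(\pi|k|))^2 = O(1)$ — but this bound alone is not exponentially small, so instead I would keep more of the exponent: use $\left|\frac{\sin(k\pi/m)}{k\pi/m}\right|^{2m} \le (1/2)^{2m-2}(m/(\pi|k|))^2$, so the far tail contributes $O(4^{-m} \sum_k (m/k)^2) = O(4^{-m} \cdot m^2 \sum_{j>2} j^{-2}) = O(m^2 4^{-m})$. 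Adding (i)–(iii) and multiplying by the $O(\sqrt m)$ normalisation prefactor gives $\delta = O(\sqrt m \,(m\,\pi^{-2m} + m^2 4^{-m})) = O(\exp(-m))$.

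The main obstacle I anticipate is making the bookkeeping on the near tail airtight: right at $|k| \approx m$ the factor $\operatorname{sinc}(k\pi/m)$ is close to (but strictly below) $\operatorname{sinc}(\pi) = 0$ in magnitude — actually near $|k|=m$ the sinc is small, and the genuinely worst tail terms sit slightly further out, where $|\operatorname{sinc}| $ is maximised subject to $|k|>m$; one must verify that this maximum is bounded away from $1$ by a constant (it is, roughly $\operatorname{sinc}(3\pi/2) \approx 0.21$ for the first sidelobe, hence $< 1/\pi$), so that $\left(\max_{|k|>m}|\operatorname{sinc}(k\pi/m)|\right)^{2m}$ decays like $c^{2m}$ with $c<1$ fixed. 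Once that constant is pinned down, the rest is the routine geometric/polynomial-times-geometric summation sketched above. A cleaner alternative, if one wants a fully rigorous constant, is to compare the discrete tail sum to the integral $\int_{|y|>1}\left(\frac{\sin\pi y}{\pi y}\right)^{2m}dy$ via monotonicity of the envelope $1/(\pi y)^{2m}$ on each sidelobe, and evaluate that integral by the same Gaussian/Laplace-type reasoning used in Lemma~\ref{lemma:bspline_norm}, which also yields an $O(\exp(-m))$ (indeed $O(\pi^{-2m}\cdot\text{poly})$) bound.
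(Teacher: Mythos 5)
Your proposal is correct and follows essentially the same route as the paper: bound the sinc by its envelope $\bigl|\sin t / t\bigr| \le 1/|t| \le 1/\pi$ on the tail, so each term decays like $\pi^{-2m}$, then absorb the normalisation via Lemma~\ref{lemma:bspline_norm} (the paper simply compares the whole tail sum to the integral $\frac{2}{\mathcal{N}^2}\int_m^\infty (m/(\pi x))^{2m}\,dx \le \frac{2}{\mathcal{N}^2}\frac{m}{2m-1}\pi^{-2m}$, i.e.\ your ``cleaner alternative,'' rather than splitting into near and far tails). Your sidelobe worry is already handled by the envelope bound, and your statement $1/\mathcal{N}^2 = O(\sqrt{m})$ is loose (in fact $1/\mathcal{N}^2 = O(m^{-1/2})$) but harmless, since any polynomial prefactor leaves the $O(\exp(-m))$ conclusion intact.
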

\begin{proof}
We define $\delta$ to be:
\begin{equation}
     \delta := \sum_{|x - E_i| > m} \left| \frac{1}{\mathcal{N}}\left(\frac{\sin \frac{(x-E_i)\pi}{m}}{ \frac{(x-E_i)\pi}{m}}\right)^m \right|^2
\end{equation}
Therefore, we can bound $\delta$ as follows:
    \begin{align}
        \delta &= \sum_{|x - E_i| > m} \left|\frac{1}{\mathcal{N}}\left(\frac{\sin \frac{(x-E_i)\pi}{m}}{ \frac{(x-E_i)\pi}{m}}\right)^m \right|^2 \\
        &\leq \sum_{|x - E_i| > m} \frac{1}{\mathcal{N}^2}\left(\frac{1}{ \frac{(x-E_i)\pi}{m}}\right)^{2m} \\
        &\leq \frac{2}{\mathcal{N}^2}\int_{m}^{\infty} \left(\frac{1}{ \frac{x\pi}{m}}\right)^{2m} dx\\
        & \leq  \frac{2}{\mathcal{N}^2} \frac{m}{2m-1}\pi^{-2m}\\
    \end{align}
    Therefore, using Lemma~\ref{lemma:bspline_norm} we get:
    \begin{equation}
        \delta = O(\exp(-m))
    \end{equation}
\end{proof}
Finally, we are ready to prove that the B-spline window function can boost QPE with the same asymptotic number of extra ancilla qubits as the Kaiser window function: $O\left(\log \log \frac{1}{\delta}\right)$.
\begin{duplicate}
To achieve a given probability $\delta$ of measuring energy outside of a given confidence interval $\epsilon$ (when using $O(\log \frac{1}{\delta})$-th B-spline boosted QPE), one must use $O(\log \frac{1}{\epsilon} + \log \log \frac{1}{\delta})$ ancillary qubits.
\end{duplicate}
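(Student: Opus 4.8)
The plan is to combine the exponential tail bound of Lemma~\ref{lemma:bspline_prob} with a bookkeeping argument relating the ``confidence interval $\epsilon$'', measured in (normalised) phase units, to the discrete Fourier bins of the ancilla register. Suppose the windowed QPE of \eqref{eq:ftb-spline} is run with $l$ total ancilla qubits, so the measured label $k$ picks out one of $2^l$ bins covering the phase range $[0,1)$; a confidence interval of half-width $\epsilon$ then spans $\epsilon\,2^l$ bins. I would first choose $l = \lceil\log_2(m/\epsilon)\rceil$, so this half-width is at least $m$ bins, i.e.\ at least the half-width of the B-spline's main Fourier lobe. With that choice the event ``measured energy lies outside the confidence interval'' is contained in the tail event $|x - E_i| > m$ analysed in Lemma~\ref{lemma:bspline_prob} (which in turn invokes Lemma~\ref{lemma:bspline_norm} for the normalisation), so its probability is $O(\exp(-m))$.

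Next I would fix the B-spline order. Since the tail probability is $O(\exp(-m))$ --- with an effective constant read off from the explicit chain of inequalities in the proof of Lemma~\ref{lemma:bspline_prob} --- there is a constant $c_0$ such that taking $m$ to be the smallest power of two with $m \ge c_0\log\frac1\delta$ forces the tail probability below $\delta$. This is exactly the $O(\log\frac1\delta)$-th B-spline named in the statement, and $m = \Theta(\log\frac1\delta)$. Requiring $m$ to be a power of two is precisely the condition under which our piecewise QSVT technique can prepare the B-spline window state, and it means we add an integer number $\log_2 m$ of ``extra'' ancilla qubits on top of the $\log_2\frac1\epsilon$ ``base'' qubits that fix the resolution.

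Finishing then reduces to counting qubits: $l = \lceil\log_2(m/\epsilon)\rceil \le \log_2\frac1\epsilon + \log_2 m + 1 = \log_2\frac1\epsilon + \Theta\!\left(\log\log\frac1\delta\right) = O\!\left(\log\frac1\epsilon + \log\log\frac1\delta\right)$. The step I expect to be the main obstacle is not any hard estimate but this bin-counting: one must track that enlarging the ancilla register simultaneously refines the bin width \emph{and} smears the window's Fourier profile across $\sim m$ bins, and check that the $\log_2 m$ extra qubits exactly cancel the second effect so the confidence interval retains a fixed width $\epsilon$ in phase units. The only remaining wrinkle is that rounding $m$ up to a power of two does not spoil the $\Theta(\log\frac1\delta)$ scaling, which is immediate since rounding at most doubles $m$.
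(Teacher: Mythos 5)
Your proposal is correct and follows essentially the same route as the paper's own proof: invoke Lemma~\ref{lemma:bspline_prob} to take $m = \Theta(\log\frac{1}{\delta})$, identify the confidence interval with the $\sim 2m$ bins of width $2^{-l}$ around $E_i$, and solve $\epsilon \sim m\,2^{-l}$ for $l = O(\log\frac{1}{\epsilon} + \log\log\frac{1}{\delta})$. The extra bookkeeping you add (rounding $m$ to a power of two, the explicit half-width-versus-bins accounting) only makes the same argument slightly more careful than the paper's terse version.
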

\begin{proof}
Let $l$ be the number of ancilla qubits used by B-spline boosted QPE. Using Lemma \ref{lemma:bspline_prob}, we can achieve a normalised confidence interval of $\frac{2m}{2^l}$ where $m = O(\log \frac{1}{\delta})$. Hence, $\epsilon = \frac{O(\log \frac{1}{\delta})}{2^{l-1}}$. Therefore,
\begin{equation*}
    l = O\left(\log \left(\frac{1}{\epsilon}\log \frac{1}{\delta}\right)\right)
\end{equation*}
\end{proof}

\end{document}